\def \S {\mathbb{S}}
\def \N {\mathbb{N}}
\def \R {\mathbb{R}}
\def \N {\mathbb{N}}
\def \n {\mathbf{n}}
\def \opt {\mathbf{opt}}
\def \s {\mathbf{s}}
\def \0 {\mathbf{0}}
\def \1 {\mathbf{1}}
\DeclareRobustCommand\onedot{\futurelet\@let@token\@onedot}
\def\@onedot{\ifx\@let@token.\else.\null\fi\xspace}
\def\ie{\emph{i.e}\onedot}
\newcommand{\keywords}[1]{\par\addvspace\baselineskip
\noindent\keywordname\enspace\ignorespaces#1}
\begin{document}

\mainmatter  % start of an individual contribution

% first the title is needed
\title{Price of Anarchy of Innovation Diffusion in Social Networks}

% the name(s) of the author(s) follow(s) next
%
% NB: Chinese authors should write their first names(s) in front of % their surnames. This ensures that the names appear correctly in
% the running heads and the author index.
%
\author{Xilun Chen\and Chenxia Wu\\
\institute{Dept. of Computer Science, Cornell University, Ithaca, NY, USA
\textsf{\{xlchen,chenxiawu\}@cs.cornell.edu}
}
}
\authorrunning{X. Chen and C. Wu}

%
% NB: a more complex sample for affiliations and the mapping to the
% corresponding authors can be found in the file "llncs.dem"
% (search for the string "\mainmatter" where a contribution starts).
% "llncs.dem" accompanies the document class "llncs.cls".
%

\maketitle

\begin{abstract}
  There have been great efforts in studying the cascading behavior in social networks such as the innovation diffusion, etc.
  Game theoretically, in a social network where individuals choose from two strategies: A (the innovation) and B (the status quo) and get payoff from their neighbors for coordination,
  it has long been known that the Price of Anarchy (PoA) of this game is not $1$,
  since the Nash equilibrium (NE) where all players take B (\textit{B Nash}) is inferior to the one all players taking A (\textit{A Nash}).
  However, no quantitative analysis has been performed to give an accurate upper bound of PoA in this game.

  In this paper, we adopt a widely used networked coordination game setting~\cite{ellison1993learning} to study how bad a Nash equilibrium can be and give a \textit{tight upper bound} of the PoA of such games.
  We show that there is an NE that is slightly worse than the \textit{B Nash}.
  On the other hand, the PoA is bounded and the worst NE cannot be much worse than the \textit{B Nash}.
  In addition, we discuss how the PoA upper bound would change when compatibility between A and B is introduced,
  and show an intuitive result that the upper bound strictly decreases as the compatibility is increased.
  \keywords{Price of Anarchy, Social Cascading Behavior, Innovation Diffusion, Networked Coordination Game}
\end{abstract}

\section{Introduction}
There have been intensive studies on social networks recently that range over many different facets of a social network,
including microscopic and macroscopic structures~\cite{backstrom2014romantic,ugander2013subgraph}, evolution and dynamics~\cite{leskovec2008microscopic,kempe2013selection}, among others.

Particularly, it attracts a lot of interest to study how information or influence spreads in a social network.
The diffusion of information or behavior within a network is ubiquitous and has profound significance to study.
The procedure of how a new technology emerges and prevails,
the adoption of a political stance among the population,
the dissemination of a new social convention,
all of these processes,
ranging from how smartphones prevailed in such an astounding speed, 
to how the convention of \textit{RT} was adopted across Twitter,
can be modeled as a cascading behavior in a social network.

Back in 1970s,
Granovetter~\cite{granovetter1978threshold} started to formulate mathematical models for the spreading process of collective behavior.
Together with a lineage of later works,
they aimed at predicting the success or failure of the diffusion.
Granovetter proposed a threshold-based model,
which has been adopted and extended since then.
The key idea is that there is a global threshold indicating the necessary proportion of neighbors of a given person adopting the innovation in order to convince that given person to conform.
One can thus investigate problems such as how should we select initial nodes of the new behavior to maximize its diffusion~\cite{kempe2003maximizing,kempe2005influential}.

Some more relevant work tackles this problem from a game-theoretic perspective,
in which each player has a set of actions to choose from and her utility depends on the interaction between her and her neighbors.

Ellison~\cite{ellison1993learning} studied the simple case where all players form a chain and each player has two actions (the innovation and the status quo).
He showed that the threshold $\frac{1}{2}$ suffice to ensure the new action will spread to the entire graph.
Some later work extended the research to lattices~\cite{blume1995statistical}.
Morris~\cite{morris2000contagion} discussed the case of general graphs.
He focused mainly on the global threshold as well, such as for which thresholds would there be a Nash equilibrium where both actions are simultaneously played.

These game theoretic models could result in a similar threshold based contagion,
but adopted a different perspective and focused on different aspects.
For instance, a threshold-based model is often ready to provide an algorithm to select proper nodes to start the innovation in order to maximize the diffusion,
while a game-theoretic model studies how individuals interact which does not necessarily have this feature.
\\

Recently there is another lineage of work~\cite{goyal2012competitive,he2013price} that studies the Price of Anarchy of the competitive cascade games.
A major distinction is that they take a different perspective in terms of game setting,
in which competition is the central concern.
For instance, Apple and Samsung are marketing their smartphones to the users.
In this competitive setting, the players are no longer the users in the graph,
but are the companies aiming at maximizing the adoption of their products.
In our example, players are Apple and Samsung,
and their strategies are the selection of the initial adoption of their products (seeds).

Under this setting, Goyal and Kearns~\cite{goyal2012competitive} showed that the two-player competitive cascade game
has a PoA upper-bounded by $4$.
And in~\cite{he2013price}, a tighter and more general bound is proposed
that the PoA has an upper-bound of $2$ for an arbitrary number of players.
\\

Different from this competitive setting,
our work will stick to the traditional game settings where users are modeled as players,
in order to study how users react to the diffusion to an innovation spread through the network.
We mainly focus on the stable states of the network, Nash Equilibria in the game, rather than the result or the speed of diffusion in the network ~\cite{kempe2003maximizing,kreindler2013rapid} to better understand the relation between Nash equilibria and the optimal social welfare.

We now present our main results in a networked coordination game to model the influences between behaviors of individuals in social networks and how bad a Nash Equilibrium (NE) can get in these games.
Some of the questions that can be answered in this paper include:
\begin{itemize}
  \item Is there an NE worse than the one that all players take the status quo?
  \item If there is, what is the worst NE and how bad can it be?
\end{itemize}
Specifically, we show a \emph{tight upper bound} of the Price of Anarchy (PoA) of the game.
It is challenging because there could be numerous NEs in this game due to the heterogeneity of the network topologies.

We show that there can be an NE worse than the one all taking the status quo,
but only by a small margin.
Furthermore, we study the effect of introducing compatibility between the two strategies,
and conclude that as the compatibility increases,
the upper bound of PoA gets lower,
which matches the intuition that the compatibility will diminish the transferring barrier and make it easier to switch to the innovation,
thus resulting in lower PoA.

% !TEX root = report.tex
\section{Game Setting}
We adopt a typical networked coordination game~\cite{ellison1993learning} setting to model the influences between behaviors of individuals. Consider an undirected graph $G=(V,E)$, in which the nodes are the individuals in the population, and edges denote that they are friends whose behaviors would influence each other. This is a simple model of a \emph{social network}.
We will consider a well concerned situation~\cite{ellison1993learning,kreindler2013rapid},
where each node has two behaviors: the \emph{new} behavior (innovation) labeled as $A$, and the \emph{old} behavior (status quo) labeled as $B$.
Each pair of adjacent individuals $(v,w)$ would receive a payoff from each other according to the following rules:
\begin{itemize}
\item if both $v$ and $w$ have the same new behavior $A$, they each receive a payoff $\alpha$, where $\alpha$ is a positive real number.
\item if both $v$ and $w$ have the same old behaviors $B$, they each receive a payoff $\beta$, where $\beta$ is a positive real number and $\beta\leq\alpha$.
\item if $v$ and $w$ have different behaviors, they receive payoff $\gamma$, where $0\leq \gamma\leq \beta$,
  which can be interpreted as the compatibility between the innovation and the status quo.
\end{itemize}
The utility of each individual is the sum of payoffs she received from all her neighbors according to the aforementioned rules.

This can then be intuitively formulated as a game $\Gamma(u_1,u_2,\cdots,u_p)$ for $p\geq2$ players (individuals).
Then the set of players is $V=\{v_1,v_2,\cdots,v_p\}$, the set of strategies (behaviors) of the player $v_i$ is $\S_i=\{A,B\}$, and the payoff function of the player $i$ is $u_i: \S\rightarrow \R$, where $\S=\S_1\times \S_2\times\cdots\times \S_p$ is the set of strategy profiles, and $\R$ denotes the set of real numbers.
To define the $u_i$, we first define a weight $W(s_i,s_j)$ of each edge $(v_i,v_j)\in E$ as a function of the used strategies $(s_i,s_j)$ of its two endpoints.
$W(s_i,s_j)$ can be defined as the following payoff matrix:

\vspace{0.1in}
\setlength{\tabcolsep}{12pt}
\begin{tabular}{lll}
&A&B\\
A&$\alpha$&$\gamma$\\
B&$\gamma$&$\beta$\\
\end{tabular}
\vspace{0.05in}

Then given a strategy vector, $\s\in\S$, the total payoff of each player $u_i(\s)$ and the social welfare $SW(\s)$ are defined as follows:
\begin{eqnarray}
u_i(\s)&=&\sum_{j,(v_i,v_j)\in E}W(s_i,s_j),\notag\\
SW(\s)&=&\sum_{i,v_i\in V}u_i.\notag
\end{eqnarray}

\section{Upper Bound for Price of Anarchy}\label{sec:poabound}
In this section, we will show a tight upper bound of the PoA of the game $\Gamma$. We first define three types of edges in the graph as follows:
\begin{definition}
The edge $(v_i,v_j)$ linking two players using the same strategy $A$, \ie, weighted $\alpha$, is called $A$-edge denoted as $e_a$.

The edge $(v_i,v_j)$ linking two players using the same strategy $B$, \ie, weighted $\beta$, is called $B$-edge denoted as $e_b$.

The edge $(v_i,v_j)$ linking two players using the different strategies, \ie, weighted $\gamma$, is called $C$-edge denoted as $e_c$.
\end{definition}

\begin{definition}
Then the total payoff of each player and the social welfare can be defined with respect to three types of edges:
\begin{eqnarray}
u_i(n_i^{e_a},n_i^{e_b},n_i^{e_c})&=&n_i^{e_a}\alpha+n_i^{e_b}\beta+n_i^{e_c}\gamma,\notag\\
SW(n^{e_a},n^{e_b},n^{e_c})&=&2n^{e_a}\alpha+2n^{e_b}\beta+2n^{e_c}\gamma,\notag
\end{eqnarray}
where $n_i^{e_a},n_i^{e_b},n_i^{e_c}$ denote the number of its incident $A$-edges, $B$-edges, $C$-edges, and $n^{e_a}, n^{e_b}, n^{e_c}$ denote the total number of $A$-edges, $B$-edges, $C$-edges in the graph: $n^{e_a}=\frac{1}{2}\sum_in_i^{e_a},n^{e_b}=\frac{1}{2}\sum_in_i^{e_b},n^{e_c}=\frac{1}{2}\sum_in_i^{e_c}$. We define the tuple $(n_i^{e_a},n_i^{e_b},n_i^{e_c})$ as $\n_i$ and its space is defined as $\N_i$, correspondingly, $\n=(n^{e_a},n^{e_b},n^{e_c})\in\N$. Each term times $2$ in the social welfare is because each edge is computed twice for its two incident nodes.
\end{definition}

\begin{definition}
\begin{equation}
PoA=\frac{\max_{\n\in\N}SW(\n)}{\min_{\n\in\N^e}SW(\n)},\notag
\end{equation}
where $\N^e$ is the space of $\n$ in NEs.
\end{definition}
It is easy to compute that the optimal value $SW(\opt)$ of the social welfare $\max_{\n\in\N}SW(\n)=2n^e\alpha$, where $n^e$ is the total number of edges, when the weights of all edges are equal to $\alpha$, since $\alpha$ is the maximum value of the weight.

Then the problem is to consider the social welfare for all NEs. It is challenging since there could be many NEs in this game. We discuss them respectively in terms of edge types.

For a state $\n$, define the following quotient:
\begin{definition}
  \begin{equation*}
    r(\n)=r(n^{e_a},n^{e_b},n^{e_c})=\frac{2(n^{e_a}+n^{e_b}+n^{e_c})\alpha}{2n^{e_a}\alpha+2n^{e_b}\beta+2n^{e_c}\gamma}
  \end{equation*}
\end{definition}
Then 
$$PoA = \max_{\n\in \N^e}(r(n^{e_a},n^{e_b},n^{e_c}))$$

We begin bounding this quotient by decomposing $G$ into two subgraphs.
\begin{definition}
  Define $\Psi = (H_\Phi, H_\Lambda)$ be a decomposition of $G$,
  where the edge set $E(H_\Phi)$ and $E(H_\Lambda)$ form a partition of $E(G)$.

  $H_\Phi$ is defined to be an edge-induced subgraph of $G$ where $E(H_\Phi)$ contains
  all $C$-edges \textbf{and} those who share endpoints to $C$-edges.

  Consequently, the remaining edges constitute $H_\Lambda$ where
  $E(H_\Lambda) \triangleq E(G)-E(H_\Phi)$.
  \label{def:decomp}
\end{definition}

Note under this decomposition,
some vertices occur in both $H_\Phi$ and $H_\Lambda$,
which is innocuous in our analysis since the payoffs reside on edges rather than vertices.
One can view that nodes are duplicated as needed during the decomposition.

\begin{definition}
  For a Nash equilibrium ${\bf n}=(n^{e_a}, n^{e_b}, n^{e_c})$ in $G$,
  denote the corresponding states in $H_\Phi$ and $H_\Lambda$
  as ${\mathbf{n}_\Phi}=(n_\Phi^{e_a}, n_\Phi^{e_b}, n_\Phi^{e_c})$
  and ${\mathbf{n}_\Lambda}=(n_\Lambda^{e_a}, n_\Lambda^{e_b}, n_\Lambda^{e_c})$
  where $n_\Phi^{e_a}+n_\Lambda^{e_a} = n^{e_a}$,
  $n_\Phi^{e_b}+n_\Lambda^{e_b} = n^{e_b}$,
  and $n_\Phi^{e_c}+n_\Lambda^{e_c} = n^{e_c}$.
\end{definition}

From Definition~\ref{def:decomp},
for any state $\n$ and the corresponding $\n_\Lambda$ and $\n_\Phi$,
it satisfies the following properties.
i) There is no $C$-edge in $\n_\Lambda$, i.e. $n_\Lambda^{e_c}=0$.
ii) Every edge in $\n_\Phi$ either is a $C$-edge or shares an endpoint to a $C$-edge.

\begin{lemma}
  For any NE $\bf n$ in $G$,
  the corresponding $\mathbf{n}_\Phi$ and $\mathbf{n}_\Lambda$ are also NEs in $H_\Phi$ and $H_\Lambda$.
  \label{lem:NEdecomp}
\end{lemma}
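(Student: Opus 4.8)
The plan is to reduce the global Nash-equilibrium condition to a purely local best-response inequality at each vertex, and then check that this inequality survives the edge partition $\Psi=(H_\Phi,H_\Lambda)$. For a vertex $v$ playing $A$ that has $a$ neighbors playing $A$ and $b$ neighbors playing $B$, its incident edges are $a$ $A$-edges and $b$ $C$-edges, so its payoff is $a\alpha+b\gamma$, whereas switching to $B$ would turn those into $C$-edges and $B$-edges for a payoff $a\gamma+b\beta$. Hence $v$ best-responds exactly when $a(\alpha-\gamma)\ge b(\beta-\gamma)$, and symmetrically $b(\beta-\gamma)\ge a(\alpha-\gamma)$ when $v$ plays $B$. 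Because $\mathbf n$ is an NE in $G$, every vertex satisfies its inequality with respect to its full neighborhood; I would then show each subgraph inherits an inequality of the same form.

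For $H_\Lambda$ the conclusion is immediate from property (i). Since $n_\Lambda^{e_c}=0$, every $H_\Lambda$-edge joins two vertices using the same strategy, so inside $H_\Lambda$ each vertex sees only same-strategy neighbors. A vertex playing $A$ therefore has $b=0$ in $H_\Lambda$ and the condition $a(\alpha-\gamma)\ge 0$ holds trivially because $\alpha\ge\gamma$; the case of a vertex playing $B$ is symmetric using $\beta\ge\gamma$. Thus $\mathbf n_\Lambda$ is an NE in $H_\Lambda$ irrespective of the original equilibrium.

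For $H_\Phi$ the argument rests on a case split on whether $v$ is incident to a $C$-edge in $G$. If $v$ is incident to some $C$-edge $(v,w)$, then every edge touching $v$ shares the endpoint $v$ with $(v,w)$ and is therefore placed in $E(H_\Phi)$ by Definition~\ref{def:decomp}; consequently the entire neighborhood of $v$ is preserved in $H_\Phi$, its counts $a,b$ are unchanged, and the $G$-equilibrium inequality is literally the $H_\Phi$-inequality. If instead $v$ touches no $C$-edge, then all of $v$'s neighbors play $v$'s own strategy, so its restricted $H_\Phi$-neighbors do too; exactly as in the $H_\Lambda$ case the inequality degenerates to $a^\Phi(\alpha-\gamma)\ge 0$ (for $A$) or $b^\Phi(\beta-\gamma)\ge 0$ (for $B$), where $a^\Phi,b^\Phi$ are the $A$- and $B$-neighbor counts of $v$ within $H_\Phi$, and holds trivially. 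Combining the two cases shows every vertex best-responds in $H_\Phi$, so $\mathbf n_\Phi$ is an NE.

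The main obstacle to anticipate is the boundary vertices carrying a $C$-edge: one must ensure that restricting to $H_\Phi$ does not discard any same-strategy ($A$- or $B$-) edge contributing to the ``gain'' side of the inequality, since losing such an edge could break the equilibrium. This is precisely what Definition~\ref{def:decomp} prevents, because a vertex incident to a $C$-edge has \emph{all} of its incident edges absorbed into $H_\Phi$, so no edge relevant to its best response is lost and its local balance is transported intact from $G$. Everything else reduces to the monotonicity $\alpha\ge\beta\ge\gamma$, and the duplicated-vertex convention noted after Definition~\ref{def:decomp} guarantees these per-vertex checks in the two subgraphs are independent.
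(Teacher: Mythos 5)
Your proof is correct and follows essentially the same route as the paper's: $\mathbf{n}_\Lambda$ is handled trivially because $H_\Lambda$ contains no $C$-edges, and $\mathbf{n}_\Phi$ is handled by the same case split on whether a vertex is incident to a $C$-edge (full neighborhood preserved in $H_\Phi$ by Definition~\ref{def:decomp}) or not (all remaining neighbors share its strategy, so deviation cannot help). Your explicit best-response inequality $a(\alpha-\gamma)\ge b(\beta-\gamma)$ merely makes precise what the paper leaves implicit; there is no substantive difference.
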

\begin{proof}
  $n_\Lambda$ is apparently an NE in $H_\Lambda$ since no $C$-edges exist in $H_\Lambda$,
  which implies that within each connected component in $H_\Lambda$,
  all players play the same strategy.

  And for $n_\Phi$, consider a vertex $u\in V(H_\Phi)$:
  \begin{description}
    \item[If $u$ is the endpoint of an $C$-edge:]
      Since all of its neighbors in $G$ are in $H_\Phi$ by definition,
      it must still satisfy the NE condition, since all players in $G$ satisfy the NE condition.
    \item[If $u$ is not the endpoint of any $C$-edges:]
      All of its neighbors play the same strategy as $u$ does,
      which gives $u$ no incentive to deviate.
  \end{description}
  $\qed$
\end{proof}

\begin{theorem}
  For any graph $G$, any Nash Equilibrium ${\bf n}=(n^{e_a}, n^{e_b}, n^{e_c})$,
  and a decomposition $\Psi=(H_\Phi,H_\Lambda)$
  \begin{equation*}
    r(\n) \leq \max(r(\n_\Phi), r(\n_\Lambda))
  \end{equation*}
  \label{thm:decomp}
\end{theorem}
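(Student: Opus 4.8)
The plan is to recognize that $r(\n)$ is a \emph{mediant} of $r(\n_\Phi)$ and $r(\n_\Lambda)$ and then invoke the elementary mediant inequality. I would write the numerator and denominator of the quotient separately as $N(\n) = 2\alpha(n^{e_a}+n^{e_b}+n^{e_c})$ and $D(\n) = 2n^{e_a}\alpha + 2n^{e_b}\beta + 2n^{e_c}\gamma$, so that $r(\n) = N(\n)/D(\n)$. The crucial structural fact is that both $N$ and $D$ are \emph{separately additive} in the edge-count vector: each is a fixed linear function of $(n^{e_a}, n^{e_b}, n^{e_c})$ with no constant term.

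First I would use the decomposition to split the edge counts. By Definition~\ref{def:decomp} the edge sets of $H_\Phi$ and $H_\Lambda$ partition $E(G)$, and by the stated properties $n_\Lambda^{e_c}=0$; hence $n^{e_a} = n_\Phi^{e_a}+n_\Lambda^{e_a}$, $n^{e_b} = n_\Phi^{e_b}+n_\Lambda^{e_b}$, and $n^{e_c} = n_\Phi^{e_c}+n_\Lambda^{e_c}$. Substituting into the linear forms $N$ and $D$ immediately gives $N(\n) = N(\n_\Phi)+N(\n_\Lambda)$ and $D(\n) = D(\n_\Phi)+D(\n_\Lambda)$, so that
\begin{equation*}
  r(\n) = \frac{N(\n_\Phi)+N(\n_\Lambda)}{D(\n_\Phi)+D(\n_\Lambda)}.
\end{equation*}
Next I would apply the mediant inequality: for reals $a,c \ge 0$ and $b,d > 0$, the value $\frac{a+c}{b+d}$ always lies between $\frac{a}{b}$ and $\frac{c}{d}$, and in particular $\frac{a+c}{b+d} \le \max\!\left(\frac{a}{b}, \frac{c}{d}\right)$. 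This follows by cross-multiplication: if, say, $\frac{a}{b}\le\frac{c}{d}$ then $ad \le bc$, whence $(a+c)d = ad+cd \le bc+cd = c(b+d)$. Taking $a=N(\n_\Phi)$, $b=D(\n_\Phi)$, $c=N(\n_\Lambda)$, $d=D(\n_\Lambda)$ yields exactly $r(\n) \le \max(r(\n_\Phi), r(\n_\Lambda))$.

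The one thing to watch — and the step I expect to need the most care — is the positivity of the denominators $D(\n_\Phi)$ and $D(\n_\Lambda)$, since the mediant inequality requires $b,d>0$. Because $\alpha,\beta>0$ and $\gamma\ge0$, a subgraph's denominator is positive unless the subgraph is empty or consists solely of $C$-edges with $\gamma=0$. If one of $H_\Phi, H_\Lambda$ has no edges the decomposition is vacuous and $r(\n)$ equals the other ratio, so the bound is immediate; if a subgraph is all $C$-edges with $\gamma=0$ its ratio is $+\infty$ and the bound is again trivial. Finally, I note that the Nash hypothesis plays no role in this purely arithmetic inequality — it is recorded in the statement only so that, via Lemma~\ref{lem:NEdecomp}, $\n_\Phi$ and $\n_\Lambda$ are themselves equilibria, which is what lets the separate ratios $r(\n_\Phi)$ and $r(\n_\Lambda)$ be bounded in the subsequent analysis.
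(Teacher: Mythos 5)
Your proof is correct and is essentially the paper's own argument: the paper likewise splits the edge counts across the decomposition, regroups numerator and denominator, and applies the mediant inequality $\frac{a+c}{b+d}\leq\max\bigl(\frac{a}{b},\frac{c}{d}\bigr)$ in its final step. Your version is slightly more careful in that it proves the mediant inequality explicitly and handles the degenerate cases of empty subgraphs or zero denominators, which the paper passes over silently.
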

\begin{proof}
\begin{align*}
  r(\n) &= \frac{2(n^{e_a}+n^{e_b}+n^{e_c})\alpha}{2n^{e_a}\alpha+2 n^{e_b}\beta + 2n^{e_c}\gamma}\\
  &= \frac{2(n_\Phi^{e_a}+n_\Phi^{e_b}+n_\Phi^{e_c}+n_\Lambda^{e_a}+n_\Lambda^{e_b}+n_\Lambda^{e_c})\alpha}{2(n_\Phi^{e_a}+n_\Lambda^{e_a})\alpha+2(n_\Phi^{e_b}+n_\Lambda^{e_b})\beta+2(n_\Phi^{e_c}+n_\Lambda^{e_c})\gamma}\\
  &= \frac{2(n_\Phi^{e_a}+n_\Phi^{e_b}+n_\Phi^{e_c})\alpha+2(n_\Lambda^{e_a}+n_\Lambda^{e_b}+n_\Lambda^{e_c})\alpha}{(2n_\Phi^{e_a}\alpha+2 n_\Phi^{e_b}\beta + 2n_\Phi^{e_c}\gamma) + (2n_\Lambda^{e_a}\alpha+2 n_\Lambda^{e_b}\beta + 2n_\Lambda^{e_c}\gamma)}\\
  &\leq \max(\frac{2(n_\Phi^{e_a}+n_\Phi^{e_b}+n_\Phi^{e_c})\alpha}{2n_\Phi^{e_a}\alpha+2 n_\Phi^{e_b}\beta + 2n_\Phi^{e_c}\gamma},\frac{2(n_\Lambda^{e_a}+n_\Lambda^{e_b}+n_\Lambda^{e_c})\alpha}{2n_\Lambda^{e_a}\alpha+2 n_\Lambda^{e_b}\beta + 2n_\Lambda^{e_c}\gamma})\\
  &= \max(r(\n_\Phi), r(\n_\Lambda))
\end{align*}
$\qed$
\end{proof}

We begin with bounding $r(\n_\Lambda)$.

\begin{theorem}\label{thm:hlambda}
  For any NE $\n$, the corresponding $\n_\Lambda$ under the decomposition satisfies:
  $$r(\n_\Lambda) \leq \frac{\alpha}{\beta}$$
\end{theorem}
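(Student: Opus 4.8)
The plan is to exploit the single structural fact that distinguishes $\n_\Lambda$ from a generic state: by the decomposition property that $H_\Lambda$ contains no $C$-edges, we have $n_\Lambda^{e_c}=0$. Substituting this into the definition of $r$ collapses the three-variable quotient into a two-variable one, namely $r(\n_\Lambda)=\frac{(n_\Lambda^{e_a}+n_\Lambda^{e_b})\alpha}{n_\Lambda^{e_a}\alpha+n_\Lambda^{e_b}\beta}$, where the common factor of $2$ cancels. The whole claim then reduces to an elementary inequality in the two nonnegative integers $n_\Lambda^{e_a}$ and $n_\Lambda^{e_b}$.

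First I would abbreviate $a=n_\Lambda^{e_a}$ and $b=n_\Lambda^{e_b}$ and restate the target $r(\n_\Lambda)\le \alpha/\beta$ as $(a+b)\alpha\beta \le \alpha(a\alpha+b\beta)$. Since $\alpha,\beta>0$ and the denominator $a\alpha+b\beta$ is strictly positive whenever $H_\Lambda$ has at least one edge, cross-multiplication is legitimate and preserves the direction of the inequality. Cancelling the factor $\alpha$ and the common term $b\beta$ from both sides leaves precisely $a\beta \le a\alpha$, i.e. $a(\alpha-\beta)\ge 0$. This holds because $a=n_\Lambda^{e_a}\ge 0$ and, by the payoff-matrix assumption $\beta\le\alpha$, we have $\alpha-\beta\ge 0$. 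The same computation shows the bound is tight: equality occurs exactly when $a=0$, that is, when every edge of $H_\Lambda$ is a $B$-edge (the all-$B$ configuration on $H_\Lambda$).

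The argument is short, so there is no genuine technical obstacle; the only points deserving care are a degenerate case and the role of the hypothesis. If $H_\Lambda$ has no edges at all ($a=b=0$), the quotient is formally $0/0$ and must be excluded or fixed by convention, but such a component contributes nothing to the social welfare and so does not affect the $\max$ taken in Theorem~\ref{thm:decomp}. I would also remark that the Nash-equilibrium hypothesis is not actually needed for this particular bound: only the structural fact $n_\Lambda^{e_c}=0$ enters, so the inequality holds for any configuration supported on $H_\Lambda$. I would nonetheless keep the NE phrasing to match the way $\n_\Lambda$ arises from Lemma~\ref{lem:NEdecomp}, which is what connects this bound back to the decomposition in Theorem~\ref{thm:decomp}.
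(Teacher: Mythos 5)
Your proof is correct and takes essentially the same route as the paper: both use the decomposition fact $n_\Lambda^{e_c}=0$ to reduce $r(\n_\Lambda)$ to the two-variable quotient $\frac{(a+b)\alpha}{a\alpha+b\beta}$ and bound it by $\frac{\alpha}{\beta}$ using $\beta\le\alpha$. The paper simply asserts this final inequality, whereas you spell out the cross-multiplication and note the degenerate empty-subgraph case and the dispensability of the NE hypothesis; these are refinements, not a different argument.
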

\begin{proof}
  Assume $\n_\Lambda=(n_\Lambda^{e_a},n_\Lambda^{e_b},0)$,
\begin{equation}
  r(\n_\Lambda) = \frac{2(n_\Lambda^{e_a}+n_\Lambda^{e_b})\alpha}{2n_\Lambda^{e_a}\alpha+2n_\Lambda^{e_b}\beta}\leq\frac{\alpha}{\beta}.\notag
\end{equation}
$\qed$
\end{proof}

Before we move on to the discussion of $H_\Phi$,
we first generalize $\n,\n_i$ to non-negative rational numbers rather than natural numbers.

\begin{remark}
  %When $\n,\n_i$ are non-negative rational numbers, we always can find a corresponding natural numbers $\n^*,\n^*_i$ by scaling a factor. Then NE holds and $SW(\opt)/SW(\n)$ keeps the same since $u_i(\n),SW(\n)$ are linear functions w.r.t. $\n$.
Since the quotient $r(\n)$ remains constant while scaling the state $\n$ by a factor,
it suffices to attain a fractional solution $\n$ whose $n^{e_a}$, $n^{e_b}$ and $n^{e_c}$ are non-negative rational numbers in order to find the worst NE compared to the optimal social welfare.
\end{remark}
For instance,
if we conclude that the state $\n=(\frac{1}{2},\frac{1}{2},1)$ maximizes $r(\n)$ among all (fractional) Nash equilibria,
we could scale $\n$ to a integral state $\n'=(1,1,2)$,
while remaining the same quotient
$r(\n') = \frac{4\alpha}{\alpha+\beta+2\gamma} = r(\n)$.
Therefore, $\n'$ is also an worst NE.
\\

Now we consider an NE $\n_\Phi=(n_\Phi^{e_a}, n_\Phi^{e_b}, n_\Phi^{e_c})$ in $H_\Phi$ 
as well as the corresponding quotient $r(\n_\Phi)$.

\begin{remark}\label{rm:par}
\begin{eqnarray}\label{eq:deca}
  \frac{\partial r(n_{\Phi}^{e_a},n_{\Phi}^{e_b},n_{\Phi}^{e_c})}{\partial n_{\Phi}^{e_a}}&=&\alpha\frac{n_\Phi^{e_b}(\beta-\alpha)+n_\Phi^{e_c}(\gamma-\alpha)}{(n_\Phi^{e_a}\alpha + n_\Phi^{e_b}\beta+n_\Phi^{e_c}\gamma)^2}\leq0\\
  \label{eqn:prpy}
\frac{\partial r(n_{\Phi}^{e_a},n_{\Phi}^{e_b},n_{\Phi}^{e_c})}{\partial n_{\Phi}^{e_b}}&=&\alpha\frac{n_\Phi^{e_a}(\alpha-\beta)+n_\Phi^{e_c}(\gamma-\beta)}{(n_\Phi^{e_a}\alpha + n_\Phi^{e_b}\beta+n_\Phi^{e_c}\gamma)^2}\\
\label{eqn:prpz}
\frac{\partial r(n_{\Phi}^{e_a},n_{\Phi}^{e_b},n_{\Phi}^{e_c})}{\partial n_{\Phi}^{e_c}}&=&\alpha\frac{n_\Phi^{e_a}(\alpha-\gamma)+n_\Phi^{e_b}(\beta-\gamma)}{(n_\Phi^{e_a}\alpha + n_\Phi^{e_b}\beta+n_\Phi^{e_c}\gamma)^2}\geq 0
\end{eqnarray}
%Eq.~\ref{eq:deca} means $r(n_{\Phi}^{e_a},n_{\Phi}^{e_b},n_{\Phi}^{e_c})$ is monotonically decreasing w.r.t. $n_{\Phi}^{e_a}$.
\end{remark}

\begin{definition}
For one $C$-edge, we call one of its endpoints using strategy $A$ as $A$-player and the other endpoint using strategy $B$ as $B$-player.
\end{definition}

We have the following properties:
\begin{lemma}\label{lem:phip}
\begin{eqnarray}
n_{\Phi}^{e_a}&\geq&\frac{\beta-\gamma}{2(\alpha-\gamma)}n_\Phi^{e_c}\notag\\
n_{\Phi}^{e_b}&\geq&\frac{\alpha-\gamma}{2(\beta-\gamma)}n_\Phi^{e_c}\notag
\end{eqnarray}
\end{lemma}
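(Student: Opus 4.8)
The plan is to obtain both inequalities directly from the per-node Nash equilibrium conditions and then aggregate them over the endpoints of the $C$-edges, the factor $\frac{1}{2}$ in the statement emerging from a double-counting argument. Throughout I would work with the node-level quantities $n_i^{e_a}, n_i^{e_b}, n_i^{e_c}$ and only at the end pass to the totals $n_\Phi^{e_a}, n_\Phi^{e_b}, n_\Phi^{e_c}$.

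First I would write down the stability condition for a single $A$-player, i.e. a node $i$ playing $A$ that is incident to some $C$-edge. Its current utility is $n_i^{e_a}\alpha + n_i^{e_c}\gamma$, whereas switching to $B$ would convert each of its $A$-edges into a $C$-edge and each of its $C$-edges into a $B$-edge, giving $n_i^{e_a}\gamma + n_i^{e_c}\beta$. The Nash condition that $i$ has no incentive to deviate then reads $n_i^{e_a}(\alpha-\gamma) \geq n_i^{e_c}(\beta-\gamma)$, i.e. $n_i^{e_a} \geq \frac{\beta-\gamma}{\alpha-\gamma}\,n_i^{e_c}$. The symmetric computation for a $B$-player yields $n_i^{e_b} \geq \frac{\alpha-\gamma}{\beta-\gamma}\,n_i^{e_c}$. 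Here I would note that the divisions require $\gamma < \alpha$ and $\gamma < \beta$; the degenerate cases $\gamma=\alpha$ or $\gamma=\beta$ force $n_\Phi^{e_c}=0$ in any NE and are handled separately as trivialities.

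Next I would sum the first inequality over the set $S_A$ of all $A$-players and the second over the set $S_B$ of all $B$-players. The central bookkeeping observation is that $\sum_{i\in S_A} n_i^{e_c} = n_\Phi^{e_c}$, since every $C$-edge has exactly one $A$-endpoint, so the sum counts each $C$-edge once; identically $\sum_{i\in S_B} n_i^{e_c} = n_\Phi^{e_c}$. On the other side, $\sum_{i\in S_A} n_i^{e_a}$ counts each $A$-edge of $H_\Phi$ once for each of its endpoints lying in $S_A$, hence at most twice, giving $\sum_{i\in S_A} n_i^{e_a} \leq 2\,n_\Phi^{e_a}$, and likewise $\sum_{i\in S_B} n_i^{e_b} \leq 2\,n_\Phi^{e_b}$. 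Combining these yields $2\,n_\Phi^{e_a} \geq \sum_{i\in S_A} n_i^{e_a} \geq \frac{\beta-\gamma}{\alpha-\gamma}\,n_\Phi^{e_c}$ and the analogous chain on the $B$ side, which rearrange into the two claimed bounds.

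The step I would be most careful about is the double-counting inequality $\sum_{i\in S_A} n_i^{e_a} \leq 2\,n_\Phi^{e_a}$: I must verify that every $A$-edge counted on the left genuinely belongs to $H_\Phi$, so that nothing is missed on the right. This holds because an $A$-edge incident to an $A$-player shares an endpoint with a $C$-edge and is therefore included in $E(H_\Phi)$ by Definition~\ref{def:decomp}. The factor of $2$ is exactly the overcounting of those $A$-edges whose \emph{both} endpoints lie in $S_A$, and it is precisely this factor that produces the $\frac{1}{2}$ appearing in the statement; the bound is tight when no $A$-edge has two $A$-player endpoints.
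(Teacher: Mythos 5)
Your proof is correct and takes essentially the same route as the paper's: the per-player Nash condition $n_i^{e_a} \geq \frac{\beta-\gamma}{\alpha-\gamma}\,n_i^{e_c}$ (and its $B$-side analogue), summed over the $A$-players, with the factor $\frac{1}{2}$ arising because each $A$-edge is shared by at most two $A$-players while each $C$-edge has exactly one $A$-endpoint. Your extra checks—that the relevant $A$-edges indeed lie in $E(H_\Phi)$ by Definition~\ref{def:decomp}, and the handling of the degenerate cases $\gamma=\alpha$ or $\gamma=\beta$—are points the paper leaves implicit.
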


\begin{proof}
We provide a proof for $n_{\Phi}^{e_a}$,
which can be symmetrically applied to $n_{\Phi}^{e_b}$.

Denote the set of all $A$-players in $H_\Phi$ as $\mathscr{A}$.

Since $\n_\Phi$ is an NE,
\begin{align*}
\forall i \in \mathscr{A}: \ u_i(n_i^{e_a},0,n_i^{e_c})&\geq u_i(0,n_i^{e_c},n_i^{e_a})\\
\Rightarrow \quad\quad n_i^{e_a} \alpha + n_i^{e_c}\gamma &\geq n_i^{e_a}\gamma + n_i^{e_c}\beta\\
\Rightarrow \quad\quad\ \ n_i^{e_a} &\geq \frac{\beta-\gamma}{\alpha-\gamma}n_i^{e_c}
\end{align*}

\begin{align*}
  n_\Phi^{e_a} \geq \frac{1}{2} \sum_{i\in\mathscr{A}} n_i^{e_a}
  \geq \frac{\beta-\gamma}{2(\alpha-\gamma)}\sum_{i\in\mathscr{A}} n_i^{e_c}
  = \frac{\beta-\gamma}{2(\alpha-\gamma)}n_\Phi^{e_c}
\end{align*}

The first inequality holds because
each $A$-edge can be shared by at most two $A$-players,
while the last equation holds because no $C$-edges can be shared between $A$-players.
$\qed$
\end{proof}

We then bound $r(\n_\Phi)$:

\begin{theorem}\label{thm:hphi}
  For any NE $\n$, the corresponding $\n_\Phi$ under the decomposition satisfies:
\begin{equation}
  r(\n_\Phi)\leq \frac{\alpha(\alpha+\beta-2\gamma)}{\alpha\beta-\gamma^2}\notag
\end{equation}
\end{theorem}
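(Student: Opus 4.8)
The plan is to read Theorem~\ref{thm:hphi} as a constrained maximization of $r(\n_\Phi)$ over the feasible region of Nash states in $H_\Phi$, and to show the maximum is attained at the unique corner where both inequalities of Lemma~\ref{lem:phip} are tight. By the scale-invariance noted in the Remark, I may treat $n_\Phi^{e_a}, n_\Phi^{e_b}, n_\Phi^{e_c}$ as non-negative rationals; if $n_\Phi^{e_c}=0$ then $H_\Phi$ has no $C$-edge and the claim is immediate, so I assume $n_\Phi^{e_c}>0$ (and, to avoid degeneracy, $\gamma<\beta\le\alpha$ with $\gamma<\alpha$; the case $\gamma=\beta$ forces $n_\Phi^{e_c}=0$ via Lemma~\ref{lem:phip} and the case $\gamma=\alpha$ makes all weights equal and $r\equiv1$). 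The feasible region is then exactly $\{\,n_\Phi^{e_a}\ge \tfrac{\beta-\gamma}{2(\alpha-\gamma)}n_\Phi^{e_c},\ n_\Phi^{e_b}\ge \tfrac{\alpha-\gamma}{2(\beta-\gamma)}n_\Phi^{e_c}\,\}$ supplied by Lemma~\ref{lem:phip}.

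First I would exploit the monotonicity in Remark~\ref{rm:par}. Equation~\eqref{eq:deca} gives $\partial r/\partial n_\Phi^{e_a}\le 0$ throughout the region, since its numerator $n_\Phi^{e_b}(\beta-\alpha)+n_\Phi^{e_c}(\gamma-\alpha)$ is non-positive when $\beta,\gamma\le\alpha$. Hence $r$ is non-increasing in $n_\Phi^{e_a}$ and is maximized by pushing $n_\Phi^{e_a}$ down to its lower bound $\tfrac{\beta-\gamma}{2(\alpha-\gamma)}n_\Phi^{e_c}$.

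The hard part is the second variable: the sign of $\partial r/\partial n_\Phi^{e_b}$ in~\eqref{eqn:prpy} is a priori indeterminate, because its numerator $n_\Phi^{e_a}(\alpha-\beta)+n_\Phi^{e_c}(\gamma-\beta)$ mixes a non-negative and a non-positive term. The key observation is that this numerator is independent of $n_\Phi^{e_b}$, so its sign is constant along the $n_\Phi^{e_b}$-direction; it therefore suffices to evaluate it after fixing $n_\Phi^{e_a}$ at the lower bound from the previous step. Substituting $n_\Phi^{e_a}=\tfrac{\beta-\gamma}{2(\alpha-\gamma)}n_\Phi^{e_c}$, I would factor the numerator as $n_\Phi^{e_c}(\beta-\gamma)\cdot\tfrac{2\gamma-\alpha-\beta}{2(\alpha-\gamma)}$, which is non-positive because $2\gamma\le 2\beta\le\alpha+\beta$. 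Thus on the face $n_\Phi^{e_a}=\tfrac{\beta-\gamma}{2(\alpha-\gamma)}n_\Phi^{e_c}$, $r$ is also non-increasing in $n_\Phi^{e_b}$ and is maximized at $n_\Phi^{e_b}=\tfrac{\alpha-\gamma}{2(\beta-\gamma)}n_\Phi^{e_c}$. Chaining the two steps along the path that first lowers $n_\Phi^{e_a}$ to its bound and then lowers $n_\Phi^{e_b}$ to its bound shows $r$ attains its maximum at the corner where both constraints of Lemma~\ref{lem:phip} hold with equality.

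It then remains to evaluate $r$ at this corner. Writing $p=\alpha-\gamma$, $q=\beta-\gamma$ and normalizing $n_\Phi^{e_c}=1$, the corner is $(q/2p,\ p/2q,\ 1)$, whence $n_\Phi^{e_a}+n_\Phi^{e_b}+n_\Phi^{e_c}=(p+q)^2/(2pq)$ and the denominator of $r$ equals $(q^2\alpha+p^2\beta+2pq\gamma)/(2pq)$, so $r=\tfrac{(p+q)^2\alpha}{\,q^2\alpha+p^2\beta+2pq\gamma\,}$. The closing step is the algebraic identity $(\alpha+\beta-2\gamma)(\alpha\beta-\gamma^2)=q^2\alpha+p^2\beta+2pq\gamma$, both sides expanding to $\alpha^2\beta+\alpha\beta^2-2\alpha\beta\gamma-\alpha\gamma^2-\beta\gamma^2+2\gamma^3$; since $p+q=\alpha+\beta-2\gamma$, this collapses the bound to $\tfrac{\alpha(\alpha+\beta-2\gamma)}{\alpha\beta-\gamma^2}$, as claimed. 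I expect the derivative-sign resolution for $n_\Phi^{e_b}$ to be the crux, with the final substitution and identity being routine.
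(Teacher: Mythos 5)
Your proof is correct and follows essentially the same route as the paper's: fix $n_\Phi^{e_a}$ at its Lemma~\ref{lem:phip} lower bound using the sign of \eqref{eq:deca}, then show the substituted function is non-increasing in $n_\Phi^{e_b}$ (your factorization $(\beta-\gamma)(2\gamma-\alpha-\beta)/(2(\alpha-\gamma))$ is exactly the paper's), and evaluate $r$ at the resulting corner. Your added care with degenerate cases and the observation that the numerator of \eqref{eqn:prpy} is independent of $n_\Phi^{e_b}$ only make the same argument slightly more rigorous.
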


\begin{proof}
From Remark~\ref{rm:par}, we know that $r(n_{\Phi}^{e_a},n_{\Phi}^{e_b},n_{\Phi}^{e_c})$ is monotone decreasing with $n_\Phi^{e_a}$.
Substituting $n_\Phi^{e_a}$ with its lower bound in Lemma~\ref{lem:phip}:

\begin{align*}
  r(n_{\Phi}^{e_a},n_{\Phi}^{e_b},n_{\Phi}^{e_c}) \leq r(\frac{\beta-\gamma}{2(\alpha-\gamma)}n_\Phi^{e_c}, n_\Phi^{e_b}, n_\Phi^{e_c})
\end{align*}

From Equation~\ref{eqn:prpy} of Remark~\ref{rm:par},
\begin{align*}
\frac{\partial r(\frac{\beta-\gamma}{2(\alpha-\gamma)}n_\Phi^{e_c},n_{\Phi}^{e_b},n_{\Phi}^{e_c})}{\partial n_{\Phi}^{e_b}}
&=\alpha\frac{(\frac{\beta-\gamma}{2(\alpha-\gamma)}n_\Phi^{e_c})(\alpha-\beta)+n_\Phi^{e_c}(\gamma-\beta)}{(n_\Phi^{e_a}\alpha + n_\Phi^{e_b}\beta+n_\Phi^{e_c}\gamma)^2}\\
&=\frac{\alpha}{Z^2}\cdot \frac{(\beta-\gamma)(2\gamma-\alpha-\beta)}{2(\alpha-\gamma)} \leq 0
\end{align*}
where $Z^2$ is a positive normalizing factor.
Therefore, $r(\frac{\beta-\gamma}{2(\alpha-\gamma)}n_\Phi^{e_c},n_{\Phi}^{e_b},n_{\Phi}^{e_c})$
is monotone decreasing with respect to $n_\Phi^{e_b}$.

Consequently,
\begin{align*}
  r(n_{\Phi}^{e_a},n_{\Phi}^{e_b},n_{\Phi}^{e_c}) 
  &\leq r(\frac{\beta-\gamma}{2(\alpha-\gamma)}n_\Phi^{e_c}, n_\Phi^{e_b}, n_\Phi^{e_c})\\
  &\leq r(\frac{\beta-\gamma}{2(\alpha-\gamma)}n_\Phi^{e_c}, \frac{\alpha-\gamma}{2(\beta-\gamma)}n_\Phi^{e_c}, n_\Phi^{e_c})\\
  &= \frac{\alpha(\frac{\beta-\gamma}{2(\alpha-\gamma)}n_\Phi^{e_c}+\frac{\alpha-\gamma}{2(\beta-\gamma)}n_\Phi^{e_c}+n_\Phi^{e_c})}{\frac{\beta-\gamma}{2(\alpha-\gamma)}\alpha n_\Phi^{e_c}+\frac{\alpha-\gamma}{2(\beta-\gamma)} \beta n_\Phi^{e_c}+ \gamma n_\Phi^{e_c}}\\
  &= \frac{\alpha(\alpha+\beta-2\gamma)}{\alpha\beta-\gamma^2}
\end{align*}
$\qed$
\end{proof}

%Combining Lemma~\ref{poa:lambda} and Lemma~\ref{poa:phi}, and according to Remark.~\ref{rm:eqg}, we have:
%\begin{theorem}\label{ca3}
%when $\n=(n^{e_a},n^{e_b},n^{e_c})$, where $n^{e_c}>0$,
%\begin{eqnarray}
%\frac{SW(\opt)}{SW(\n)}&=&\frac{2(n_{\mathbf{\Lambda}}^{e_a}+n_{\mathbf{\Lambda}}^{e_b}+n_{\mathbf{\Lambda}}^{e_c}+n _{\Phi}^{e_a}+n_{\Phi}^{e_b}+n_{\Phi}^{e_c})\alpha}{2n_{\Phi}^{e_a}\alpha+2n_{\Phi}^{e_b}\beta+2n_{\Lambda}^{e_a}\alpha+2n_{\Lambda}^{e_b}\beta}\notag\\
%&\leq&\max(r(n_{\mathbf{\Lambda}}^{e_a},n_{\mathbf{\Lambda}}^{e_b},n_{\mathbf{\Lambda}}^{e_c}),r(n_{\Phi}^{e_a},n_{\Phi}^{e_b},n_{\Phi}^{e_c}))\leq\theta+1.\notag
%\end{eqnarray}
%\end{theorem}

Our Main Theorem follows combining Theorem.~\ref{thm:decomp}, Theorem.~\ref{thm:hlambda} and Theorem.~\ref{thm:hphi}.
\begin{theorem}[Main Theorem]
  For any given $\alpha$, $\beta$, and $\gamma$:
\begin{equation}
  PoA = \max_{\n\in \mathbb{N}^e} (r(\n)) \leq \frac{\alpha(\alpha+\beta-2\gamma)}{\alpha\beta-\gamma^2} \notag
\end{equation}
\label{thm:mainthm}
\end{theorem}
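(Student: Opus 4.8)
The plan is simply to assemble the three theorems already established. Since $PoA = \max_{\n\in\mathbb{N}^e}r(\n)$, I would fix an arbitrary NE state $\n$ and apply Theorem~\ref{thm:decomp} to its decomposition $\Psi=(H_\Phi,H_\Lambda)$, obtaining $r(\n)\leq\max(r(\n_\Phi),r(\n_\Lambda))$. By Lemma~\ref{lem:NEdecomp}, the component states $\n_\Phi$ and $\n_\Lambda$ are themselves NEs in $H_\Phi$ and $H_\Lambda$, so the bounds proved for those subgraphs apply to them. Theorem~\ref{thm:hlambda} then gives $r(\n_\Lambda)\leq\alpha/\beta$ and Theorem~\ref{thm:hphi} gives $r(\n_\Phi)\leq\frac{\alpha(\alpha+\beta-2\gamma)}{\alpha\beta-\gamma^2}$. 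Combining, every NE satisfies
\begin{equation*}
  r(\n)\leq\max\left(\frac{\alpha}{\beta},\ \frac{\alpha(\alpha+\beta-2\gamma)}{\alpha\beta-\gamma^2}\right),
\end{equation*}
and taking the max over all $\n\in\mathbb{N}^e$ yields the same bound for $PoA$.

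The one genuine step left is to identify which of the two quantities in this maximum is the larger, so that the stated single expression is correct. I would show the $H_\Phi$ bound always dominates, i.e. $\frac{\alpha}{\beta}\leq\frac{\alpha(\alpha+\beta-2\gamma)}{\alpha\beta-\gamma^2}$. After cancelling the common positive factor $\alpha$ and cross-multiplying, this reduces to $\alpha\beta-\gamma^2\leq\beta(\alpha+\beta-2\gamma)$, which simplifies to $0\leq\beta^2-2\beta\gamma+\gamma^2=(\beta-\gamma)^2$; this holds unconditionally. Hence the maximum equals the $H_\Phi$ term, completing the argument.

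The cross-multiplication requires both denominators to be positive. Here $\beta>0$ by assumption, and from $0\leq\gamma\leq\beta\leq\alpha$ we get $\gamma^2\leq\beta^2\leq\alpha\beta$, so $\alpha\beta-\gamma^2\geq0$; the only degenerate case $\alpha\beta-\gamma^2=0$ forces $\alpha=\beta=\gamma$, where the game is trivial and the bound is understood as a limit. Thus there is no real obstacle: the theorem is a direct corollary of the decomposition and the two subgraph bounds, with the only nontrivial observation being the perfect-square inequality $(\beta-\gamma)^2\geq0$ that orders the two candidate bounds.
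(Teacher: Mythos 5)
Your proposal is correct and follows exactly the paper's route: the paper's entire proof is the single sentence that the Main Theorem ``follows combining'' Theorem~\ref{thm:decomp}, Theorem~\ref{thm:hlambda} and Theorem~\ref{thm:hphi}. In fact you go slightly beyond the paper by explicitly verifying the step it leaves implicit, namely that $\frac{\alpha}{\beta}\leq\frac{\alpha(\alpha+\beta-2\gamma)}{\alpha\beta-\gamma^2}$ via the inequality $(\beta-\gamma)^2\geq 0$, together with the positivity of the denominator needed for the cross-multiplication.
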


To provide more insights into how bad an NE can be compared to the optimal social welfare,
we now present some discussions on the upper bound given in Theorem~\ref{thm:mainthm}.

\begin{remark}
  $\frac{\alpha(\alpha+\beta-2\gamma)}{\alpha\beta-\gamma^2}$
  is monotone decreasing with respect to $\gamma$.
  \label{rem:dec-gamma}
\end{remark}
\begin{proof}
  Denote $\frac{\alpha(\alpha+\beta-2\gamma)}{\alpha\beta-\gamma^2}$ as $p(\alpha,\beta,\gamma)$.

  \begin{align*}
    \frac{\partial p(\alpha,\beta,\gamma)}{\partial \gamma} = -\frac{2\alpha(\alpha-\gamma)(\beta-\gamma)}{(\alpha\beta-\gamma^2)^2} \leq 0
  \end{align*}
  \qed
\end{proof}

Remark~\ref{rem:dec-gamma} matches the intuition that Nash equilibria get better as more compatibility is introduced.
As the compatibility between the innovation and the status quo increases,
the switching barrier preventing the users adopting the innovation diminishes,
which encourages more users to adopt the innovation,
hence improving the social welfare at an NE.

Extremely, when $\gamma=\beta$, which means the innovation provides perfect compatibility with the status quo, the upper bound in Theorem~\ref{thm:mainthm} reduces to $\frac{\alpha}{\beta}$.
In this case, there are no $C$-edges, and the worst NE is the one in which all users taking $B$ strategy (the status quo),
yielding a Price of Anarchy of exactly $\frac{\alpha}{\beta}$.

On the other hand, the case where no compatibility exists ($\gamma=0$),
which is a commonly studied case in many previous works,
has the worst Price of Anarchy,
as stated in Corollary~\ref{cor:worstPoA}.

\begin{corollary}
  For any given $\alpha$ and $\beta$:
  \begin{equation*}
    PoA \leq \frac{\alpha}{\beta} + 1
  \end{equation*}
  regardless of the value of $\gamma$.
  \label{cor:worstPoA}
\end{corollary}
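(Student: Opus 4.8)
The plan is to read this off directly from the Main Theorem together with the monotonicity already established in Remark~\ref{rem:dec-gamma}, so almost no new work is required. The Main Theorem states $PoA \leq p(\alpha,\beta,\gamma)$, where $p(\alpha,\beta,\gamma) = \frac{\alpha(\alpha+\beta-2\gamma)}{\alpha\beta-\gamma^2}$. Since the corollary asks for a bound that holds \emph{regardless} of $\gamma$, the natural move is to maximize this right-hand side over the admissible range $0 \leq \gamma \leq \beta$ and then invoke the worst case uniformly.

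The key step is that, by Remark~\ref{rem:dec-gamma}, $p(\alpha,\beta,\gamma)$ is monotone decreasing in $\gamma$ on $[0,\beta]$. Hence its maximum over this interval is attained at the left endpoint $\gamma=0$, so I would simply substitute $\gamma=0$. This yields
\begin{equation*}
  p(\alpha,\beta,0) = \frac{\alpha(\alpha+\beta)}{\alpha\beta} = \frac{\alpha}{\beta} + 1,
\end{equation*}
and therefore for every admissible $\gamma$ one has $PoA \leq p(\alpha,\beta,\gamma) \leq p(\alpha,\beta,0) = \frac{\alpha}{\beta} + 1$, which is exactly the claim. Note this also pins down the $\gamma=0$ (no-compatibility) case as the worst one, consistent with the discussion preceding the corollary.

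There is essentially no obstacle here: the statement is an immediate consequence of the two preceding results. The only point warranting a moment's care is confirming that the endpoint $\gamma=0$ lies in the admissible range (it does, since $0 \leq \gamma \leq \beta$) and that $p$ is well-defined there, i.e.\ the denominator $\alpha\beta-\gamma^2$ is positive; at $\gamma=0$ this is just $\alpha\beta>0$, so the evaluation is unproblematic. Thus the proof reduces to a single substitution justified by the monotonicity remark.
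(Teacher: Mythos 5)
Your proof is correct and follows exactly the paper's own argument: invoke Remark~\ref{rem:dec-gamma} to conclude the bound in Theorem~\ref{thm:mainthm} is maximized at $\gamma=0$, then substitute to get $\frac{\alpha(\alpha+\beta)}{\alpha\beta}=\frac{\alpha}{\beta}+1$. Your additional check that the denominator is positive at $\gamma=0$ is a harmless extra touch not present in the paper.
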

\begin{proof}
  From Remark~\ref{rem:dec-gamma},
  for any fixed $\alpha$ and $\beta$,
  the worst PoA is achieved when $\gamma=0$.
  
  This corollary follows when substituting $\gamma$ with $0$ in Theorem~\ref{thm:mainthm}.
  \qed
\end{proof}

\section{Tightness of the PoA Upper Bound}
\begin{proposition}
	The upper bound of PoA given in Theorem~\ref{thm:mainthm} is tight for any given $\alpha$,$\beta$ and $\gamma$.
	\label{pro:tightness}
\end{proposition}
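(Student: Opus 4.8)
The plan is to exhibit, in the generic case $\gamma < \beta \le \alpha$, a family of graphs together with explicit Nash equilibria whose ratio $r(\n)$ equals --- or, when a certain degree ratio is forced to be irrational, converges to --- the bound $\frac{\alpha(\alpha+\beta-2\gamma)}{\alpha\beta-\gamma^2}$ of Theorem~\ref{thm:mainthm}. (The boundary case $\gamma=\beta$ is handled separately: there are then no $C$-edges, the bound collapses to $\frac{\alpha}{\beta}$, and the all-$B$ configuration attains it.) First I would read off from the proofs of Theorem~\ref{thm:hphi} and Lemma~\ref{lem:phip} exactly when equality can hold. Since $r$ is strictly decreasing in $n_\Phi^{e_a}$ and in $n_\Phi^{e_b}$, the bound is attained only when both inequalities of Lemma~\ref{lem:phip} are simultaneously tight; tracing that lemma's proof, this forces (i) every $A$-player to be exactly indifferent, $n_i^{e_a}(\alpha-\gamma)=n_i^{e_c}(\beta-\gamma)$, and symmetrically every $B$-player, and (ii) every $A$-edge to join two $A$-players and every $B$-edge to join two $B$-players. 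A construction meeting these conditions in which, additionally, every vertex is incident to a $C$-edge will have $H_\Lambda$ empty, so $\n=\n_\Phi$ and $r(\n)=r(\n_\Phi)$ equals the bound by the final computation in Theorem~\ref{thm:hphi}.

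Next I would build such a graph. Put $a=\alpha-\gamma$ and $b=\beta-\gamma$, and pick positive integers $d_A^{\mathrm{int}},d_A^{\mathrm{ext}},d_B^{\mathrm{int}},d_B^{\mathrm{ext}}$ with $a\,d_A^{\mathrm{int}}=b\,d_A^{\mathrm{ext}}$ and $b\,d_B^{\mathrm{int}}=a\,d_B^{\mathrm{ext}}$ (possible exactly when $b/a$ is rational). Let the $A$-players induce a $d_A^{\mathrm{int}}$-regular graph, the $B$-players a $d_B^{\mathrm{int}}$-regular graph, and join the two sides by a bipartite graph of $C$-edges in which each $A$-player has degree $d_A^{\mathrm{ext}}$ and each $B$-player has degree $d_B^{\mathrm{ext}}$. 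Choosing $N_A$ $A$-players and $N_B$ $B$-players with $N_A d_A^{\mathrm{ext}}=N_B d_B^{\mathrm{ext}}$ makes the bipartite degree sequence consistent, and all three regular/biregular graphs exist once $N_A,N_B$ are taken large with suitable parities (standard existence of regular and biregular graphs). By construction each player is exactly indifferent, so the profile is a Nash equilibrium, $H_\Lambda$ is empty, and substituting the counts $n^{e_a}=\tfrac12 N_A d_A^{\mathrm{int}}$, $n^{e_b}=\tfrac12 N_B d_B^{\mathrm{int}}$, $n^{e_c}=N_A d_A^{\mathrm{ext}}$ into $r(\n)$ reproduces $\frac{\alpha(\alpha+\beta-2\gamma)}{\alpha\beta-\gamma^2}$ exactly.

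Finally, when $b/a=(\beta-\gamma)/(\alpha-\gamma)$ is irrational no finite graph can make the indifference conditions exact, so I would replace them by approximation: choose integer degree ratios $d_A^{\mathrm{int}}/d_A^{\mathrm{ext}}$ and $d_B^{\mathrm{int}}/d_B^{\mathrm{ext}}$ converging to $b/a$ and $a/b$ respectively \emph{from above}, so the Nash inequalities of Lemma~\ref{lem:phip} stay satisfied and each configuration remains a genuine equilibrium. Since $r(\n_\Phi)$ is continuous in these ratios, the resulting PoA values converge to the bound, showing that the supremum of the PoA over all instances equals $\frac{\alpha(\alpha+\beta-2\gamma)}{\alpha\beta-\gamma^2}$ and hence the bound cannot be improved. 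I expect the main obstacle to lie precisely here: guaranteeing that a finite graph realizes the prescribed (and, in the irrational case, carefully chosen rational) degree data while keeping every vertex $C$-incident so that $H_\Lambda$ vanishes. This is where the regular/biregular existence conditions and the balance $N_A d_A^{\mathrm{ext}}=N_B d_B^{\mathrm{ext}}$ must be verified, rather than in any payoff algebra, which is already settled by Theorem~\ref{thm:hphi}.
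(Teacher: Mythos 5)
Your proposal is correct, and its starting point --- forcing equality in Lemma~\ref{lem:phip} by making every player exactly indifferent, every $A$-edge join two $A$-players, every $B$-edge join two $B$-players, and every vertex touch a $C$-edge so that $H_\Lambda$ is empty --- is precisely the equality analysis the paper itself uses. Where you genuinely differ is in how the witness graph is produced. The paper first builds a \emph{fractional} Nash equilibrium out of half-edges, with $n^{e_c}=\frac{\alpha-\gamma}{\beta-\gamma}+1$ chosen to satisfy the feasibility constraint of its Lemma~\ref{lem:phip2} (enough $C$-edges to create enough $A$- and $B$-players so that the required $A$- and $B$-edges can each be shared by two players of the right type), and then scales the fractional state to an integral one, invoking scale-invariance of $r(\n)$; you instead build the graph directly as a $d_A^{\mathrm{int}}$-regular graph on the $A$-side, a $d_B^{\mathrm{int}}$-regular graph on the $B$-side, and a biregular bipartite layer of $C$-edges, with degrees chosen so that indifference holds exactly. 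The two routes buy different things. The paper's yields small concrete examples (its two figures), but its scaling step silently requires all the fractional quantities, hence the ratio $(\beta-\gamma)/(\alpha-\gamma)$, to be rational --- a restriction it never states; your version makes this restriction explicit and, more importantly, supplies the missing limiting argument for irrational $(\beta-\gamma)/(\alpha-\gamma)$, where exact indifference is unattainable in any finite graph and tightness can only mean that NE ratios approach the bound as a supremum, with the degree ratios chosen above $(\beta-\gamma)/(\alpha-\gamma)$ and $(\alpha-\gamma)/(\beta-\gamma)$ so each approximant remains a genuine equilibrium. Your construction also absorbs the paper's Lemma~\ref{lem:phip2} caveat automatically: taking $N_A,N_B$ large with the balance $N_A d_A^{\mathrm{ext}}=N_B d_B^{\mathrm{ext}}$ guarantees simplicity of the graph and a sufficient supply of players of each type. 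In short, your proof is a more careful and strictly more general rendering of the paper's argument, at the cost of losing the paper's minimal explicit examples.
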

From the proof in Section~\ref{sec:poabound},
it can be deduced that the PoA upper bound can be achieved when $H_\Lambda$ is empty
and 
\begin{align*}
  n_{\Phi}^{e_a}&= \frac{\beta-\gamma}{2(\alpha-\gamma)}n_\Phi^{e_c}\\
  n_{\Phi}^{e_b}&= \frac{\alpha-\gamma}{2(\beta-\gamma)}n_\Phi^{e_c}
\end{align*}

The intuition of yielding such a Nash equilibrium is to ensure each $A$-edge is shared by two $A$-players and symmetrically each $B$-edge is shared by two $B$-players, 
which could minimize the number of $A$-edges and $B$-edges for a given number of $C$-edges,
resulting a worst NE.

There is, however, a caveat when constructing the worst NE for a given tuple of ($\alpha,\beta,\gamma$)
that the lower bound in Lemma~\ref{lem:phip} is not always achievable for any graph,
which is elaborated in Lemma~\ref{lem:phip2}, a stronger version of Lemma~\ref{lem:phip}.

\begin{lemma}
  \begin{eqnarray}
    n_{\Phi}^{e_a}&\geq&\max(\frac{\beta-\gamma}{\alpha-\gamma}n_\Phi^{e_c} - \frac{1}{2}n_\Phi^{e_c}(n_\Phi^{e_c}-1), \frac{\beta-\gamma}{2(\alpha-\gamma)}n_\Phi^{e_c})\notag\\
    n_{\Phi}^{e_b}&\geq&\max(\frac{\alpha-\gamma}{\beta-\gamma}n_\Phi^{e_c}-\frac{1}{2}n_\Phi^{e_c}(n_\Phi^{e_c}-1),\frac{\alpha-\gamma}{2(\beta-\gamma)}n_\Phi^{e_c})\notag
  \end{eqnarray}
  \label{lem:phip2}
\end{lemma}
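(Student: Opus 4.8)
The plan is to strengthen Lemma~\ref{lem:phip} by accounting for a combinatorial obstruction that the original proof ignored. Recall that in Lemma~\ref{lem:phip} the per-player NE condition gives $n_i^{e_a} \geq \frac{\beta-\gamma}{\alpha-\gamma}n_i^{e_c}$ for each $A$-player $i$, and summing these yields a bound that was then halved on the grounds that each $A$-edge can be shared by \emph{two} $A$-players. The first term in the new maximum is the original $\frac{\beta-\gamma}{2(\alpha-\gamma)}n_\Phi^{e_c}$ bound, so that half of the claim is immediate from Lemma~\ref{lem:phip} itself. The work is entirely in establishing the \emph{other} term, namely
\begin{equation*}
  n_\Phi^{e_a} \geq \frac{\beta-\gamma}{\alpha-\gamma}n_\Phi^{e_c} - \tfrac{1}{2}n_\Phi^{e_c}(n_\Phi^{e_c}-1),
\end{equation*}
which dominates when $n_\Phi^{e_c}$ is small.

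First I would sum the per-player inequality $n_i^{e_a} \geq \frac{\beta-\gamma}{\alpha-\gamma}n_i^{e_c}$ over all $A$-players $i\in\mathscr{A}$ \emph{without} the factor-of-two halving, obtaining $\sum_{i\in\mathscr{A}} n_i^{e_a} \geq \frac{\beta-\gamma}{\alpha-\gamma}\sum_{i\in\mathscr{A}} n_i^{e_c} = \frac{\beta-\gamma}{\alpha-\gamma}n_\Phi^{e_c}$, where the final equality again uses that no $C$-edge is shared between two $A$-players. The quantity $\sum_{i\in\mathscr{A}} n_i^{e_a}$ counts each $A$-edge once for every $A$-player endpoint it has. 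Since both endpoints of an $A$-edge are $A$-players, this sum equals exactly $2n_\Phi^{e_a}$ only if \emph{all} $A$-edges lie between two $A$-players in $\mathscr{A}$; in general some $A$-edges may have an endpoint that is an $A$-player not incident to any $C$-edge, so $\sum_{i\in\mathscr{A}} n_i^{e_a} \leq 2n_\Phi^{e_a}$, which is precisely where the lossy halving came from.

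The key new idea is to bound the deficit between $\sum_{i\in\mathscr{A}} n_i^{e_a}$ and $2n_\Phi^{e_a}$. The $A$-players in $\mathscr{A}$ are the $A$-endpoints of $C$-edges, and there are at most $n_\Phi^{e_c}$ distinct such players (at most one per $C$-edge, with possible coincidences only reducing the count). The $A$-edges running \emph{among} these $A$-players form a simple graph on at most $n_\Phi^{e_c}$ vertices, which can contain at most $\binom{n_\Phi^{e_c}}{2} = \tfrac{1}{2}n_\Phi^{e_c}(n_\Phi^{e_c}-1)$ edges. I would argue that each such internal $A$-edge is the sole source of a ``double count'' in $\sum_{i\in\mathscr{A}} n_i^{e_a}$, so that $\sum_{i\in\mathscr{A}} n_i^{e_a} \leq n_\Phi^{e_a} + \tfrac{1}{2}n_\Phi^{e_c}(n_\Phi^{e_c}-1)$: every $A$-edge contributes at most $1$ to the sum for a lone $A$-player endpoint, plus an extra $1$ only for the bounded number of edges joining two members of $\mathscr{A}$. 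Combining this upper bound on the sum with the summed NE lower bound yields $n_\Phi^{e_a} \geq \frac{\beta-\gamma}{\alpha-\gamma}n_\Phi^{e_c} - \tfrac{1}{2}n_\Phi^{e_c}(n_\Phi^{e_c}-1)$, and the $e_b$ case follows by the same symmetric argument on $B$-players.

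The main obstacle I anticipate is making the counting argument in the last step fully rigorous, in particular justifying the inequality $\sum_{i\in\mathscr{A}} n_i^{e_a} \leq n_\Phi^{e_a} + \tfrac{1}{2}n_\Phi^{e_c}(n_\Phi^{e_c}-1)$ cleanly. The subtlety is that $n_i^{e_a}$ counts \emph{all} incident $A$-edges of player $i$, not only those internal to $\mathscr{A}$, so I must carefully partition the contributions to $\sum_{i\in\mathscr{A}} n_i^{e_a}$ into edges with exactly one endpoint in $\mathscr{A}$ (counted once, and bounded by $n_\Phi^{e_a}$) and edges with both endpoints in $\mathscr{A}$ (counted twice, and bounded in number by $\binom{n_\Phi^{e_c}}{2}$), then verify that these two edge classes together do not overcount $n_\Phi^{e_a}$. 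I would also need to confirm that taking the maximum with the original bound is legitimate, which is clear since both inequalities hold simultaneously, so their maximum is a valid lower bound as well.
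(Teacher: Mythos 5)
Your proposal is correct and follows essentially the same route as the paper's own proof: sum the per-player NE inequality over $\mathscr{A}$ without halving, then subtract the possible double-counting of $A$-edges internal to $\mathscr{A}$, bounded by $\binom{n_\Phi^{e_c}}{2}$ since there are at most $n_\Phi^{e_c}$ such players and at most one edge per pair. Your explicit partition of $\sum_{i\in\mathscr{A}} n_i^{e_a}$ into edges with one versus two endpoints in $\mathscr{A}$ is in fact a cleaner justification of the inequality the paper states in a single line.
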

\begin{proof}
We only provide a proof for $n_{\Phi}^{e_a} \geq \frac{\beta-\gamma}{(\alpha-\gamma)}n_\Phi^{e_c} - \frac{1}{2}n_\Phi^{e_c}(n_\Phi^{e_c}-1)$,
which can be symmetrically applied to $n_{\Phi}^{e_b}$.
And the other part has been proved in Lemma~\ref{lem:phip}.

Similar to Lemma~\ref{lem:phip}, denote the set of all $A$-players in $H_\Phi$ as $\mathscr{A}$.

From Lemma~\ref{lem:phip}:
\begin{align*}
\forall i \in \mathscr{A}: n_i^{e_a} &\geq \frac{\beta-\gamma}{\alpha-\gamma}n_i^{e_c}
\end{align*}

Since each $C$-edge has one $A$-player,
there are at most $n_\Phi^{e_c}$ $A$-players.
And since there is at most one edge between each pair of players:
\begin{equation*}
  n_\Phi^{e_a} \geq \sum_{i\in \mathscr{A}} n_i^{e_a} - \frac{n_\Phi^{e_c}(n_\Phi^{e_c}-1)}{2}
  \geq\frac{\beta-\gamma}{\alpha-\gamma}n_\Phi^{e_c}-\frac{n_\Phi^{e_c}(n_\Phi^{e_c}-1)}{2}
\end{equation*}
\qed
\end{proof}

Therefore, in order to construct an NE where 
$n_{\Phi}^{e_a}= \frac{\beta-\gamma}{2(\alpha-\gamma)}n_\Phi^{e_c}$
and $n_{\Phi}^{e_b}= \frac{\alpha-\gamma}{2(\beta-\gamma)}n_\Phi^{e_c}$,
it is necessary that
\begin{align*}
  \frac{\beta-\gamma}{\alpha-\gamma}n_\Phi^{e_c} - \frac{1}{2}n_\Phi^{e_c}(n_\Phi^{e_c}-1) &\leq \frac{\beta-\gamma}{2(\alpha-\gamma)}n_\Phi^{e_c}
\end{align*}
and
\begin{align*}
  \frac{\alpha-\gamma}{\beta-\gamma}n_\Phi^{e_c}-\frac{1}{2}n_\Phi^{e_c}(n_\Phi^{e_c}-1) &\leq \frac{\alpha-\gamma}{2(\beta-\gamma)}n_\Phi^{e_c}
\end{align*}
Solving these two inequalities:
\begin{equation}
  n_\Phi^{e_c} \geq \frac{\alpha-\gamma}{\beta-\gamma} + 1
\end{equation}
The intuition for Lemma~\ref{lem:phip2} is that to share a given number of $A$-edges or $B$-edges,
enough $A$-players or $B$-players are needed.
For instance, if there are $2$ $B$-players in total, then at most $1$ $B$-edge can be shared between them.
To achieve an NE, a specific number of $B$-edges are needed.
It is hence necessary to have enough $C$-edges to produce enough $B$-players to ensure that each of these $B$-edges can be shared by two $B$-players.

Now a Nash equilibrium $\n=(n^{e_a},n^{e_b},n^{e_c})$ yielding the PoA bound shown in Theorem~\ref{thm:mainthm} can be constructed as follows:
\begin{itemize}
  \item First construct a fractional NE $\n$
    \begin{itemize}
      \item Construct $n^{e_c} = \frac{\alpha-\gamma}{\beta-\gamma}+1$ $C$-edges.
      \item For each $A$-player, create $\frac{\beta-\gamma}{\alpha-\gamma}$ half-$A$-edges, and $\frac{\alpha-\gamma}{\beta-\gamma}$ half-$B$-edges.
      \item Group those half-edges into pairs,
        yielding $n^{e_a} = \frac{\beta-\gamma}{2(\alpha-\gamma)}n^{e_c}$ $A$-edges
        and $n^{e_b} =\frac{\alpha-\gamma}{2(\beta-\gamma)}n^{e_c}$ $B$-edges.
    \end{itemize}
  \item Then scale $\n$ to an integral solution
\end{itemize}

From the argument in Lemma~\ref{lem:phip} and Lemma~\ref{lem:phip2},
the resulting state 
$$\n=(\frac{\beta-\gamma}{2(\alpha-\gamma)}n^{e_c}, \frac{\alpha-\gamma}{2(\beta-\gamma)}n^{e_c}, n^{e_c})$$
is an NE, where $n^{e_c}=\frac{\alpha-\gamma}{\beta-\gamma}+1$.

Therefore,
\begin{align*}
  \frac{SW(OPT)}{SW(\n)} &= \frac{2\alpha(n^{e_a}+n^{e_b}+n^{e_c})}{2\alpha n^{e_a}+2\beta n^{e_b}+2\gamma n^{e_c}}\\
  &= \frac{\alpha(\frac{\beta-\gamma}{2(\alpha-\gamma)}n^{e_c} + \frac{\alpha-\gamma}{2(\beta-\gamma)}n^{e_c} + n^{e_c})}{\alpha \frac{\beta-\gamma}{2(\alpha-\gamma)}n^{e_c} + \beta \frac{\alpha-\gamma}{2(\beta-\gamma)}n^{e_c} + \gamma n^{e_c}}\\
  &= \frac{\alpha(\frac{\beta-\gamma}{2(\alpha-\gamma)} + \frac{\alpha-\gamma}{2(\beta-\gamma)} + 1)}{\alpha \frac{\beta-\gamma}{2(\alpha-\gamma)} + \beta \frac{\alpha-\gamma}{2(\beta-\gamma)} + \gamma }\\
  &= \frac{\alpha(\alpha+\beta-2\gamma)}{\alpha\beta-\gamma^2}
\end{align*}

To better illustrate how this can be done,
we give two concrete examples in the case of $(\alpha=1,\beta=1,\gamma=0)$ and $(\alpha=3,\beta=2,\gamma=1)$.

For $(\alpha=1,\beta=1,\gamma=0)$,
PoA has an upper bound of $\frac{1(1+1-0)}{1\cdot1-0} = 2$.
We can construct a fractional NE $\n=(\frac{1}{2},\frac{1}{2},1)$ following the aforementioned mechanism.
When scaling up, we get $\n=(1,1,2)$.

Figure~\ref{fig:poa-k1} shows such an NE in which the social welfare
$SW = \alpha+\beta+2\gamma = 2$,
while the optimal SW is $OPT = (1+1+2)\alpha = 4 = 2 SW$.

\begin{figure}[h]
	\centering
	\includegraphics[width=0.25\textwidth]{./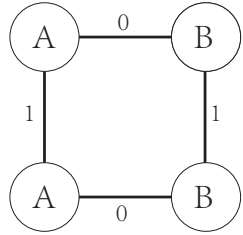}
	\caption{An NE with $SW=\frac{1}{2}OPT$ for $(\alpha=1,\beta=1,\gamma=0)$}
	\label{fig:poa-k1}
\end{figure}

For $(\alpha=3,\beta=2,\gamma=1)$ with $PoA\leq \frac{3(3+2-2)}{3\cdot2-1} = \frac{9}{5}$,
$n^{e_c}=\frac{\alpha-\gamma}{\beta-\gamma}+1 = 3$.
Then the constructed fractional NE $\n=(\frac{3}{4},3,3)$.
We get $\n=(1,4,4)$.

And a graph depicting such a state is shown in Figure~\ref{fig:poa-k2}.
The state shown is a NE with social welfare $SW=2\alpha+8\beta+8\gamma=30$.
And the optimal SW is $OPT=18\alpha=54= \frac{9}{5} SW$.
\begin{figure}[h]
	\centering
	\includegraphics[width=0.4\textwidth]{./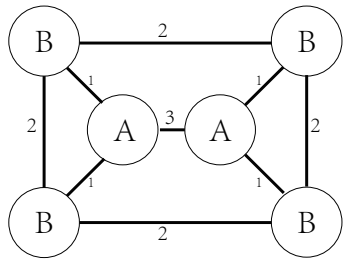}
	\caption{An NE with $SW=\frac{5}{9}OPT$ for $(\alpha=3,\beta=2,\gamma=1)$}
	\label{fig:poa-k2}
\end{figure}

\section{Conclusion}
In this paper,
we provide a tight upper bound for the Price of Anarchy of the cascading behavior in social networks,
which has long been used to model a variety of social behaviors such as the diffusion of innovation,
the adoption of novel conduct, etc.

We discuss the cascading behavior in a networked coordination game setting
in which nodes are individuals in the social network and edges denote relationship that influence can be exerted over.
Two behaviors (the innovation A and the status quo B) exist in the game,
and if two adjacent individuals adopt the same behavior,
they will both receive the corresponding payoff depending on which behavior they adopt,
where the payoff of A is inherently higher than (or equal to) B.
However, if they do not play coordinately,
they will receive a less payoff which can be interpreted as the compatibility between the two behaviors.

This game is known to have numerous Nash equilibria depending on the payoffs as well as on the network topology.
Previous work seldom quantitatively address how bad a Nash equilibrium can be.
To the best of our knowledge,
even the question that whether the PoA is bounded for a given setting of payoffs remained elusive until this paper.

In this work we showed that the Price of Anarchy can be slightly worse than the case where all players take B.
However, it is pretty close to the worst Nash equilibrium,
which even in the worst case that the compatibility between A and B is $0$,
still has a PoA upper bound of
$\frac{\alpha}{\beta}+1$,
where $\alpha$ and $\beta$ are the payoffs of A and B respectively.

In the future,
it would be desirable to see whether the PoA upper bound can be generalized to the game with heterogeneous payoffs instead of a universal payoff for all edges.
One may also consider a more sophisticated way of introducing compatibility into this model such as in~\cite{immorlica2007role} where an extra strategy to adopt both A and B with an additional cost is introduced.
It would be interesting to know whether and how the PoA upper bound can be adapted to those models.

\section{Acknowledgements}
We thank \'Eva Tardos and Jon Kleinberg for valuable comments and inspirational discussions on Algorithmic Game Theory and Social Cascading Behavior.

\bibliography{report}
\bibliographystyle{plain}

\end{document}